\newtheorem{lemma}{Lemma}
\newtheorem{remark}{Remark}
\newtheorem{theorem}{Theorem} 
\begin{document}
\title{Asynchronous Periodic Distributed Event-Triggered Frequency Control of Microgrids}

\author{
%	\vskip 1em
%	Joined work
	Keywan Mohammadi, \emph{Student Member,~IEEE},
	Elnaz Azizi, \emph{Graduate Student Member,~IEEE},
	Mohammad-Taghi Hamidi-Beheshti, \emph{Member,~IEEE},
    Ali Bidram, \emph{Member,~IEEE},
    and	Sadegh Bolouki, \emph{Member,~IEEE}
	
\thanks{K.~ Mohammadi, E.~Azizi, M.~Beheshti, and S.~Bolouki are with Department of Electrical and Computer Engineering, Tarbiat Modares University, Tehran, Iran.e-mails:(keywan.mohammadi@modares.ac.ir, e.azizi@modares.ac.ir, mbehesht@modares.ac.ir, bolouki@modares.ac.ir)

A.~Bidram is with Department of Electrical and Computer Engineering, The University of New Mexico, Albuquerque, New Mexico. email:(bidram@unm.edu)}
}

\maketitle
\begin{abstract}
In this paper, we introduce a distributed secondary frequency control scheme for an islanded ac microgrid under event-triggered communication.
An integral type event-triggered mechanism is proposed by which each distributed generator (DG) asynchronously and periodically checks its triggering
condition and determines whether to update its control inputs and broadcast its states to neighboring DGs. In contrast to existing event-triggered
strategies on secondary control of microgrids, under the proposed sampled-data based event-triggered mechanism, DGs need not be synchronized to a common
clock and each individual DG checks its triggering condition periodically, relying on its own clock. Furthermore, the proposed method efficiently reduces
communication and computation complexity. We provide sufficient conditions under which all DGs' frequencies asymptotically converge to the common reference
frequency value. Finally, effectiveness of our proposed method is verified by simulating different scenarios on a well-established islanded ac microgrid benchmark in the MATLAB/Simulink environment.
\end{abstract}
\begin{IEEEkeywords}
Asynchronous event-triggered control,\;
distributed secondary control,\;
frequency restoration,\;
islanded microgrid,\; 
multi-agent System.
\end{IEEEkeywords}

%\markboth{IEEE TRANSACTIONS ON INDUSTRIAL INFORMATICS}%
%{}

\definecolor{limegreen}{rgb}{0.2, 0.8, 0.2}
\definecolor{forestgreen}{rgb}{0.13, 0.55, 0.13}
\definecolor{greenhtml}{rgb}{0.0, 0.5, 0.0}

\section{Introduction}
    Emerging distributed energy resources have shaped a new structure in power distribution networks, paving the way for creation of the microgrid concept \cite{mousavi2018autonomous}. In normal operation, microgrids are connected to the main grid and their voltage and frequency are imposed by the upstream grid. A microgrid can get disconnected from the main grid and go to the autonomous mode. Despite the advantages of microgrids in enhancing the power system's flexibility, they present some technical challenges such as control and power management issues. Hence, in microgrids, a hierarchical control scheme is tasked to ensure reliable performance in the face of probable challenges \cite{lasseter2002microgrids}.
    
    Decentralized primary controller, which is located at the innermost layer of the hierarchical structure, deals with fast dynamics and stability of the microgrid system \cite{bidram2012hierarchical, guerrero2010hierarchical}. However, the primary control level causes deviations of voltage and frequency from their nominal rating. Therefore, to compensate the steady-state deviations, an outer control layer, namely secondary controller, can be applied. Restoring frequency and voltage magnitudes caused by the primary controller is the main objective of the secondary control level.
    
    Early research on the secondary control of microgrids mitigated steady-state deviations of voltage and frequency in a centralized manner \cite{bidram2012hierarchical, guerrero2010hierarchical, mehrizi2010potential}. However, due to the heavy communicational burden on the central controller and high sensitivity of the network to centralized architectures may reduce the system's reliability. Therefore, distributed cooperative control strategies with sparse and robust communication networks, became appropriate alternatives for the secondary controller design \cite{bidram2013secondary}.

    On the other hand, distributed cooperation and coordination within networked systems have become the focal point of research in a wide variety of scientific and engineering problems \cite{qin2016recent,cao2012overview,gulzar2018multi}. Considering the problem of frequency and voltage synchronization in microgrids as a leader-follow consensus problem, the secondary control design can be conducted based upon distributed coordination theory in multi-agent systems. In much of the research in this field, considering continuous time communication between DGs as an assumption is evident. However, discrete sample-data interaction is more realistic for data exchange in communication networks. Furthermore, in practice, frequently gathering information and updating control actions exhaust communication and computation capabilities of DGs' digital tools. This has led to the emergence of event-triggered control strategies as sound alternatives to sampled-data techniques \cite{henningsson2008sporadic,lunze2010state}. 
    
    Almost all the recent efforts in event-triggered secondary control of microgrid systems have been done under the assumption of DGs' capability of continuously or periodically but synchronously evaluating the triggering condition during the process. Existing synchronous periodic event-triggered techniques need a globally synchronized clock, according to which all DGs evaluate their event conditions, update their control signals, and broadcast their states to other DGs. However, due to the large scales of microgrid systems, synchronization to a common clock seems not to be always reasonable and imposes physical limitations. This paper proposes a novel distributed event-triggered secondary control in order to tackle this challenge.
    
    \subsection{Related Work}
    
    The very first attempts to design secondary controller utilizing distributed cooperative control theory were \cite{bidram2013secondary} and \cite{bidram2014multiobjective}. In these articles, the nonlinear and heterogeneous dynamics of the DGs are transformed to the linearized dynamics using feedback linearization method. Consequently, the voltage and frequency restoration problem resembles a linear distributed tracking problem which has been widely studied in the multi-agent systems literature. In \cite{xu2018optimal}, authors introduce a finite-time framework for the distributed secondary controller, by which the frequency regulation and active power sharing are well achieved while a decoupled design for voltage regulation and reactive power sharing at different time scales with the frequency controller is enabled. Considering noisy measurements, a distributed noise-resilient secondary control is proposed in \cite{dehkordi2018distributed}, in which a mean-square average-consensus protocol has been employed to regulate voltage and frequency in case of corrupted communication channels. Time delay effects on the secondary control layer is thoroughly addressed in reference \cite{ahumada2015secondary}, showing that the model predictive controller has more robustness in case of time delays. In \cite{amoateng2017adaptive}, the model based distributed controllers are designed firstly and then adaptive neural networks are utilized to approximate the uncertain/unknown dynamics of the microgrid system.
    
    Event-triggered techniques have been investigated in distributed secondary control of microgrid systems in \cite{ding2018distributed,chen2017secondary,fan2016distributed,zhou2019distributed}. In \cite{ding2018distributed},  a distributed secondary active power sharing and frequency control, based on a sample-based event-triggered communication strategy, is proposed that effectively reduces the communication complexity. In \cite{chen2017secondary}, utilizing event-triggered secondary voltage and frequency control, fair sharing of both active and reactive powers between power sources are investigated. Authors in \cite{fan2016distributed} utilized an event-triggered mechanism for active and reactive power sharing of microgrids. Considering uncertainties a distributed $H_\infty$ consensus approach with an event-triggered communication scheme is presented in \cite{zhou2019distributed}.
\subsection{Contributions}

    To the best of authors' knowledge, no research has been dedicated to the asynchronous event-triggered secondary control problem of microgrid systems yet. In our proposed method, each DG is equipped with its own clock and may have different event-checking instants from the rest of the system. Cyber network problems under asynchronous communication are clearly more complicated than those under synchronous communication, as the latter set of problems can be viewed as special cases of the former. In order to fill this gap, this paper investigates the cooperative secondary control problem of ac microgrid systems based on asynchronous periodic event-triggered strategy, bringing model one step closer to reality. This paper has the following salient contributions that, to the best of our knowledge have not been exploited yet:

    \begin{itemize}
    
        \item A distributed secondary frequency scheme using an event-triggered mechanism is proposed. It is demonstrated that the proposed mechanism is able to achieve a nearly identical frequency regulation while reducing the rate of communication and computation.
        
        \item Compared with existing event-triggered mechanisms on secondary control of microgrids, this paper is the first to propose an event-checking mechanism which is capable coping with the asynchronous event-checking behavior of DGs.
        
        \item From practical perspective, unlike traditional synchronized event-triggered mechanisms, our model setup does not require DGs to be coordinated to a global synchronized clock. Therefore, implementation of our proposed mechanism is more practical and efficient than existing GPS clock based mechanisms.
        
    \end{itemize}
    The remainder of this paper is organized as follows. Section II presents the preliminaries of graph theory, while Section III provides the dynamical modeling of an autonomous microgrid. The proposed secondary frequency control schemes is presented in section IV. In Section V, the effectiveness of the proposed secondary control method is validated on a microgrid test system using MATLAB/Simulink software environment. Finally, this paper is concluded in VI , where future directions of this research are stated. 

\section{Preliminaries on Graph Theory}

    In this work, we consider a network of DGs whose communication topology is represented by a weighted, directed , simple graph $\mathcal{G} =(V,E,A)$, in which $V=\{v_1,v_2,\dots,v_n\}$ is the set of nodes, each representing a DG, $E \subset V\times V$  represents the edge sets, each representing a directed communication channel from a DG to another, and $A=[a_{ij}]\in R^{n \times n}$ is the generalized adjacency matrix formed by edge weights of the graph that are all assumed non-negative. Concretely, an edge from DG $i$ to DG $j$ exists if there is a communication channel from DG $i$ to DG $j$, i.e., DG $i$ is able to send data to DG $j$. We notice that this channel of communication can be as well inferred from the value of $a_{ij}$. More precisely, DG $i$ is able to send data to DG $j$ if and only if $a_{ij} > 0$. The graph Laplacian matrix of $\mathcal{G}$ is defined as $L=[l_{ij}] \in \mathbb{R}^{(n \times n)}$, in which $l_{ii}=\sum_{j\neq i}a_{ij}$  and $l_{ij}=-a_{ij}$. A directed path from $v_i$ to $v_j$ is a sequence of edges, expressed as $\{(v_i,v_k),(v_k,v_l),...,(v_m,v_j)\}$. A directed graph is called $strongly$ $connected$ if there exists a directed path from any node to any other node \cite{ren2007information}.
    
    \begin{lemma} {\normalfont \cite{xiao2009finite}}
    If $\mathcal{G}$ is a strongly connected directed graph with Laplacian matrix $L$, there exists a vector $w=[w_1, w_2, \dots, w_n]$ with all positive elements such that $wL=0$. Furthermore, defining $W = diag(w_1, w_2, \dots, w_n)$, the matrix $WL+L^TW$  is semi definite.
    \end{lemma}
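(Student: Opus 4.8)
The plan is to handle the two assertions separately. For the existence of a positive left null vector, I would first observe that $L$ has zero row sums by construction: since $l_{ii} = \sum_{j \neq i} a_{ij}$ and $l_{ij} = -a_{ij}$, each row sums to zero, so $L\mathbf 1 = 0$ and $0$ is an eigenvalue of $L$ with right eigenvector $\mathbf 1$. To produce a positive \emph{left} eigenvector, I would pass to a nonnegative matrix and invoke Perron--Frobenius theory. Choosing a constant $c \ge \max_i l_{ii}$ and setting $M = cI - L$ makes every entry of $M$ nonnegative, and strong connectivity of $\mathcal G$ makes $M$ irreducible. Since $M\mathbf 1 = c\mathbf 1$ with $\mathbf 1 > 0$, the Perron--Frobenius theorem forces $c$ to be the simple Perron eigenvalue $\rho(M)$, whose associated left eigenvector $w$ may be taken strictly positive. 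From $wM = \rho(M)w = cw$ I then recover $wL = 0$, giving the desired $w$ with all positive entries.

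For the semidefiniteness claim, I would set $Q = WL + L^{\mathsf T}W$ and show it is the Laplacian of an undirected weighted graph, which is automatically positive semidefinite. First, $Q$ is symmetric because $Q^{\mathsf T} = L^{\mathsf T}W + WL = Q$. A direct computation of the entries gives, for $i \ne j$, $Q_{ij} = -(w_i a_{ij} + w_j a_{ji}) \le 0$, while $Q_{ii} = 2w_i \sum_{j\ne i} a_{ij} \ge 0$. The crucial step is verifying that $Q$ has zero row sums: the $i$-th row sum equals $w_i\sum_{j\ne i} a_{ij} - \sum_{j\ne i} w_j a_{ji}$, and this vanishes precisely because the identity $w_i\sum_{j\ne i}a_{ij} = \sum_{j\ne i} w_j a_{ji}$ is the $i$-th coordinate of the relation $wL = 0$ established above.

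With symmetry, nonpositive off-diagonals, and zero row sums in hand, $Q$ is the Laplacian of the undirected graph with nonnegative edge weights $\tilde a_{ij} := w_i a_{ij} + w_j a_{ji}$, so its quadratic form telescopes into a sum of squares,
\begin{equation}
x^{\mathsf T} Q x = \tfrac12 \sum_{i \ne j} \tilde a_{ij}\,(x_i - x_j)^2 \ge 0,
\end{equation}
which proves $Q$ is positive semidefinite. I expect the main obstacle to lie in the first part: carefully justifying that strong connectivity yields irreducibility and hence a \emph{strictly positive}, simple left null vector via Perron--Frobenius, rather than merely a nonnegative one. Once $w > 0$ with $wL = 0$ is secured, the second part is essentially bookkeeping, since recognizing $Q$ as a symmetric weighted Laplacian immediately delivers the sum-of-squares form and thus the asserted semidefiniteness.
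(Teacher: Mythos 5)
Your proof is correct. Note that the paper itself does not prove this lemma; it simply imports it from the cited reference, so there is no in-paper argument to compare against. What you have written is the standard, self-contained proof of that imported result, and both halves check out. In the first part, the shift $M = cI - L$ with $c \ge \max_i l_{ii}$ is nonnegative, strong connectivity of $\mathcal G$ gives irreducibility of $M$ (its off-diagonal pattern is exactly that of $A$), and the fact that $\mathbf 1$ is a positive right eigenvector for the eigenvalue $c$ does force $c = \rho(M)$ by Perron--Frobenius, so the left Perron vector $w > 0$ satisfies $wL = 0$. In the second part, your entry computations are right: $Q_{ij} = -(w_i a_{ij} + w_j a_{ji})$ for $i \ne j$, $Q_{ii} = 2 w_i \sum_{j \ne i} a_{ij}$, and the row sums of $Q$ vanish because $Q\mathbf 1 = WL\mathbf 1 + L^{\mathsf T} w^{\mathsf T} = 0$ using both $L\mathbf 1 = 0$ and $wL = 0$; the resulting symmetric zero-row-sum matrix with nonpositive off-diagonals is a weighted undirected Laplacian, whose quadratic form is the sum of squares you display, hence positive semidefinite (which is evidently what the paper means by ``semi definite''). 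The only stylistic remark is that the row-sum identity is seen most cleanly as the single vector identity $Q\mathbf 1 = 0$ rather than coordinatewise, but your coordinatewise version is the same computation.
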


\section{Dynamical Modeling of an Autonomous Microgrid}

        A microgrid is a complex dynamical system consisting of physical layers, control layers, and cyber infrastructures. An inverter-based DG as the main building block of the microgrid system is depicted in Fig.~\ref{Inverter_based DG}. The large signal dynamical model of each DG is represented on its own direct and quadrature (d-q) reference frame. For constructing the model of the entire system, the reference frame of one DG is assigned as the common frame with the rotating frequency of  $\omega_{com}$. The dynamics of other DGs must then be translated to this common one, i.e., loads and lines dynamics are represented on the common frame. Details on transformation equations are provided in \cite{pogaku2007modeling}.
        \begin{figure}[t]
            \centering            \includegraphics[scale=0.64]{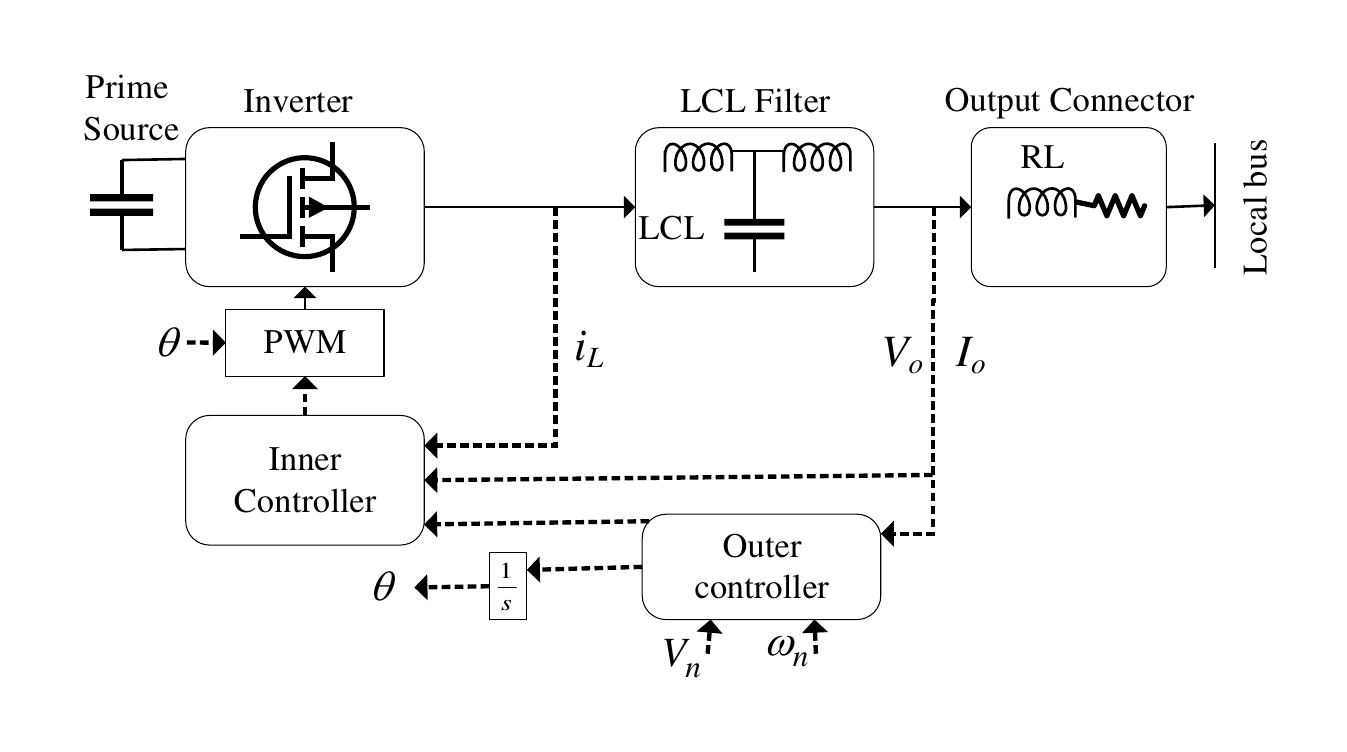}
            \caption{Block diagram of an inverter-based DG.}
            \label{Inverter_based DG}
        \end{figure}
        
        Compensating voltage and frequency deviations can be defined as a steady-state error elimination problem. Thus, for the secondary controller design, neglecting fast dynamics of inner loops due to their poor effects on the steady-state performance of the microgrid system could be permissible \cite{bidram2013secondary,rasheduzzaman2015reduced}. Accordingly, they are in this case removed from the modeling equations \cite{bidram2013secondary}. Then, the algebraic equations of the droop controller are written as \cite{bidram2013secondary}
        \begin{equation} \label{Droop}
            \begin{cases}
                \omega_i = \omega_{ni} -m_{pi}P_i\vspace{.05in}\\     v^*_{odi} = V_{ni}-n_{Qi}Q_i \vspace{.05in}\\   v^*_{oqi}=0
            \end{cases}
        \end{equation}
        where $\omega_{ni}$ and $V_{ni}$ are respectively the nominal setpoint of rotating frequency and the output voltage provided by the secondary controller. Internal control loops voltage references are $v_{odi}^*$ and $v_{oqi}^*$. The operating frequency of $i^{\rm{th}}$ inverter bridge is $\omega_{i}$. The LC filter and output connector differential equations are expressed in the following ($v_{id}=v_{od}^*$) \cite{pogaku2007modeling}:
        \begin{equation} \label{Eq2}
            \begin{cases}
                \dot{i}_{ldi} = \frac{-r_{fi}}{L_{fi}}i_{ldi} + \omega_ii_{lqi} +\frac{1}{L_{fi}}v_{idi}-\frac{1}{L_{fi}}v_{odi}\vspace{.05in}\\ \dot{i}_{lqi}=\frac{-r_{fi}}{L_{fi}}i_{lqi} -\omega_ii_{ldi}+\frac{1}{L_{fi}}v_{iqi}-\frac{1}{L_{fi}}v_{oqi} \vspace{.05in}\\
                \dot{v}_{odi}=\omega_iv_{oqi}+\frac{1}{C_{fi}}i_{ldi}-\frac{1}{C_{fi}}i_{odi}\vspace{.05in}\\
               \dot{v}_{oqi}=-\omega_iv_{odi}+\frac{1}{C_{fi}}i_{lqi}-\frac{1}{C_{fi}}i_{oqi}
               \vspace{.05in}\\
               
               \dot{i}_{odi} = \frac{-r_{ci}}{L_{ci}}i_{odi}+\omega_ii_{oqi}+\frac{1}{L_{ci}}v_{odi}-\frac{1}{L_{ci}}v_{bdi}\vspace{.05in}\\
               
               \dot{i}_{oqi} = \frac{-r_{ci}}{L_{ci}}i_{oqi}-\omega_ii_{odi}+\frac{1}{L_{ci}}v_{oqi}-\frac{1}{L_{ci}}v_{bqi}
            \end{cases}
        \end{equation}
        Equations \eqref{Droop} and \eqref{Eq2} can be written in a multi-input multi-output (MIMO) nonlinear compact form as
        \begin{equation*}
            \begin{cases}
                \dot{x}_i=f_i(x_i)+g_{i1}(x_i)u_{i1}+g_{i2}(x_i)u_{i2}+k_i(x_i)D_i \vspace{.05in}\\
                y_i=h_i(x_i)
            \end{cases}
        \end{equation*}
        where $x_i=[i_{ldi}, i_{lqi}, v_{odi}, v_{oqi}, i_{odi}, i_{oqi}]^T$  consists of the direct and quadratic components of $i_{li}$, $v_{oi}$ and $i_{oi}$, $u_i=[u_{i1}, u_{i2}]^T=[\omega_{ni}, V_{ni}]^T$, $D_i=[v_{bdi}, v_{bqi}]^T$ are control and disturbance inputs, respectively, and $y_i=[y_{i1}, y_{i2}]^T=[v_{odi}, \omega_i ]^T$ is formed by the output voltage and frequency.

%%%%%%%%%%%%%%%%%%%%%%%%%%%%%%%%%%%%%%%%%%%%%%%%%%%%%%%%%%%%%%%%%%%
\section{Asynchronous Periodic Event-Triggered Secondary Frequency Control}

    We consider an islanded ac microgrid with $N$ DGs, each of which contains a primary source, voltage source inverter (VSI), an LC filter, and an output connector. A basic control framework of the primary control layer is shown in Fig.~\ref{Inverter_based DG}. In what follows, first the problem statement is presented, and then, the proposed asynchronous periodic event-triggered distributed secondary frequency controller is developed. 
    
\subsection{Problem Statement}

    The control issue which is considered in this paper is to regulate the operating frequency of an islanded ac microgrid based on distributed cooperation control of multi-agent systems. This controller selects proper control input $\omega_{ni}$ in \eqref{Droop} to restore the operating frequency of DGs, $\omega_i$, to their reference value, $\omega_{ref}$. We herein assume that the communication framework among DGs is described by a strongly connected directed graph. Recalling the frequency droop characteristic in \eqref{Droop}, one can establish a relation between operating frequency $\omega_i$ and the control input, $\omega_{ni}$ as
        \begin{equation} \label{relation}
           \omega_{i}=\omega_{ni}-m_{pi}P_i.
        \end{equation}  
    Differentiating both sides of \eqref{relation} and defining an auxiliary control input $u_{\omega i}$, one has
        \begin{equation} \label{auxilary}
            \dot{\omega}_i=  \dot{\omega}_{ni}-m_{pi}\dot{P}_i=u_{wi},
        \end{equation}    
    \noindent where $u_{\omega i} \in \mathbb{R} $.  
    
    In this work, we consider the problem of asynchronous behavior of the communication network in microgrid systems. It is assumed that the local secondary frequency controller of each DG, described by \eqref{auxilary}, samples its desired states at fixed period times, relying on its own clock. This is an arguably expected behavior for real-world  multi-agent systems like microgrids. Since such systems cover a large-scale geographical area, synchronization to a global synchronized clock needs GPS based infrastructures which is not always convenient. In such circumstances, even though DGs have similar fixed sampling period times, they may start to sample their states at different time instants, resulting in asynchronous communication throughout the process. In the particular case of event-triggered controller design, the system's asynchronous communication behavior leads to asynchronous event-checking time instants. Therefore, we aim to design an event-triggered control mechanism that, beside their capability of reducing communication and computation complexity, are also be able to handle asynchronous communication within the network. Accordingly, as illustrated in Fig.~\ref{Event-detection}, in spite of DGs fixed periodic event-checking samplings, they may have different event-checking instants with respect to the rest of the microgrid system.

    Let the common event-checking period be denoted by $h$ and DGs' starting times $t_0^1, t_0^2,...,t_0^N$ belong to the time interval $[0,h)$. Thus, each DG $i$ checks its triggering condition at discrete times $t_0^i, t_1^i, \ldots $, where $t_k^i=t_0^i+kh$, $\forall k > 0$. Let the sequence $\left(t_{(0)}^i, t_{(1)}^i, \ldots\right)$ denotes the event instants of DG $i$, which is a subsequence of event-checking instants $\left(t_0^i, t_1^i,\ldots\right)$. Then, 
    \begin{equation} \label{latest broadcast}
             \hat{\omega}_i(t)=\omega_i(t^i_{(k)}), t^i_{(k)}\le\; t < t^i_{(k+1)},~k=0,1,2,...
    \end{equation}
    \noindent defines the latest broadcast operating frequency of DG $i$ at any given time $t$. In other words, $\hat{\omega}_i$ is a piecewise constant function that only changes value at event times.
    \begin{figure}
        \centering
        \includegraphics[scale=0.55]{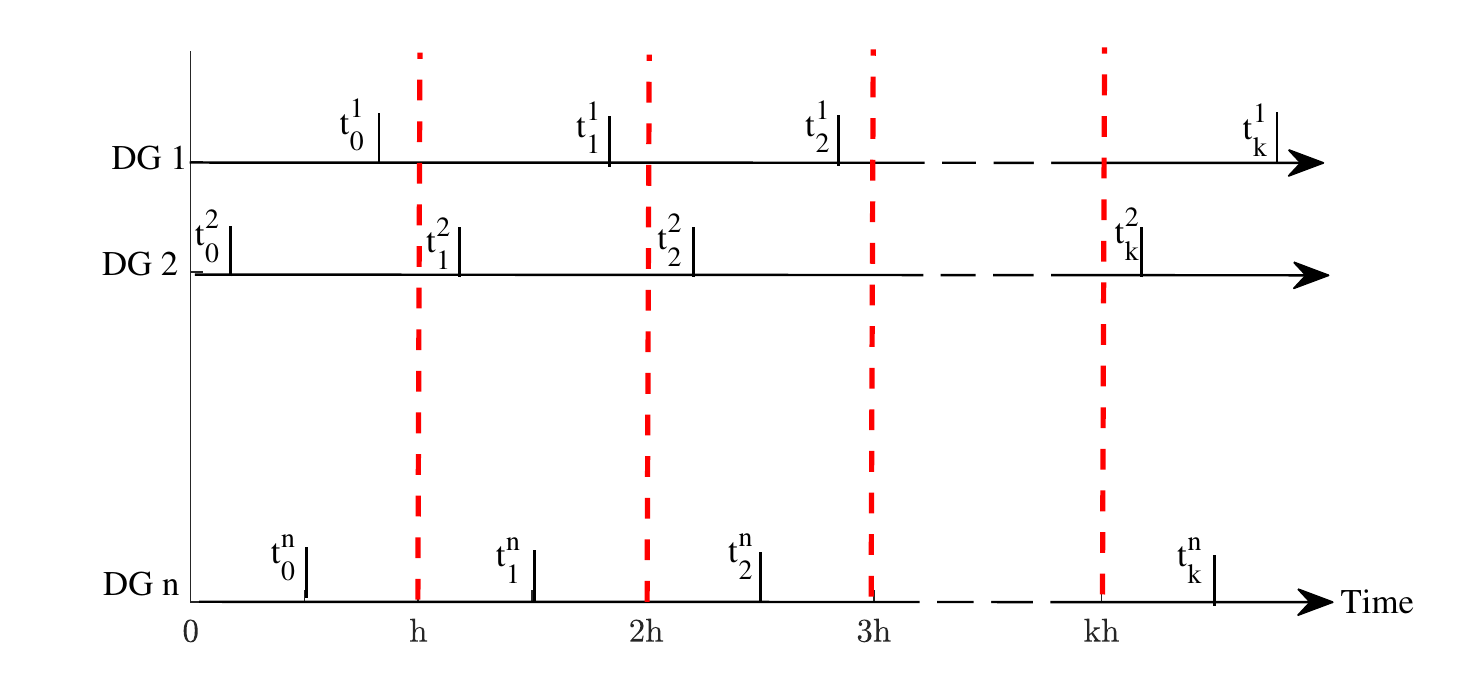}
        \caption{Event-checking time instants of the DGs.}
        \label{Event-detection}
    \end{figure}
\subsection{Control Design}

    Given the assumptions and initialization described above, we investigate the frequency restoration problem for the asynchronous distributed system in \eqref{auxilary} under periodic event-triggered control. The auxiliary control input is given as
        \begin{equation} \label{auxilary control}
          u_{\omega_i}(t)=-c_{\omega} e_{\omega_i}(t),~t\ge h,
        \end{equation}
    \noindent where $c_\omega \in R$ is the frequency control gain which adjusts the convergence speed and $e_{\omega i}$ is the following neighborhood tracking error:
        \begin{equation} \label{Error}
            e_{\omega_i}(t)=\sum_{j\in N_i} \tilde{a}_{ij}(\hat{\omega}_i(t)-\hat{\omega}_j(t))+\tilde{g}_i(\hat{\omega}_i(t)-\omega_{ref}),~t\ge h,
        \end{equation}
    \noindent in which $\tilde{a}_{ij} \ge 0$ is the edge weight of the communication graph and indicates the communication strength between two connected DG $i$ and DG $j$.  The pinning gain denoted as $\tilde{g}_i$; $\tilde{g}_i > 0$ if and only if the $i^{th}$ DG is connected to the reference. The time invariant and constant state of the leader (reference) node is denoted as $\omega_{ref}$. 
    From \eqref{Error}, the global tracking error vector can be defined as
        \begin{equation} \label{T error}
            e_{\omega}(t)=(\tilde{L}+\tilde{G})(\hat{\omega}(t)-{\omega}_{ref}\mathbf{1}_n),~t\ge\; h,
        \end{equation}
    where,
        \begin{equation*}
            e_{\omega} (t)=[ e_{\omega 1}(t),   e_{\omega 2}(t),  \dots,  e_{\omega n}(t)]^T,
        \end{equation*}
        \begin{equation*}
           \hat{\omega}(t)=[\hat{\omega}_{1}(t),    \hat{\omega}_{2}(t),  \dots,  \hat{\omega}_{n}(t)]^T,
        \end{equation*}
    and $\mathbf{1}_n$ is the vector of all ones with size $n$. Let $\tilde{L}$ be the graph Laplacian of $\mathcal{G}=(V,E,A)$ and $\tilde{G} =\text{diag}(\tilde{g}_1,\tilde{g}_2, \dots, \tilde{g}_n)$  be the pining gain vector.
   
   \begin{remark}
\normalfont In \eqref{auxilary control}, updating control protocol $u_{\omega i}$ depends on the neighborhood tracking error $e_{\omega i}$, which itself depends on the latest updated information of the piecewise constant functions $\hat{\omega} (t)$. Therefore, $u_{\omega i}$ is updated at both its own event times and those of its neighbors. Furthermore, it is worth mentioning that since the overall system does not synchronously start to activate and broadcast data, we assume for each DG $i$ that $\tilde{a}_{ij}=0$, $\forall t < max\left(t_0^i,t_0^j\right)$.
      \end{remark}     
    Combining the controller gain $c_\omega$ with the term $\tilde{L}+\tilde{G}$ and defining $\hat{\delta}(t) = \left(\hat{\omega} (t)-\omega_{ref} 1_n \right)$  as the global disagreement vector, given the global tracking error defined in \eqref{T error}, the closed-loop model for the linear system \eqref{auxilary} is represented by
        \begin{equation} \label{closed loop}
         \dot{\omega}(t)=-(L+G)\hat{\delta}(t),~t\ge \; h.
        \end{equation}
    Motivated by the findings of \cite{wang2017consensus}, we suggest the following event-triggered condition: 
    \begin{equation} \label{Proposed C}
        \begin{aligned}
          &|\omega_i(t^i_{(k)}+ph)-\omega_i(t^i_{(k)})| > \vspace{.05in}\\
            &\sigma_{\omega} \sqrt{\frac{\int_{t^i_{(k)}+(p-1)h}^{t^i_{(k)}+ph} (L_i\hat{\omega}(z)+g_i(\hat{\omega}_i(z)-\omega_{ref}))^2\, dz}{h}},
        \end{aligned}
\end{equation}
    \noindent where $\sigma_{\omega}$ is a positive scalar, $L_i$ is the $i^{th}$ row of the graph Laplacian $L$, and $t_{(k)}^i+ph$ is the $p^{th}$ event-checking instant after the latest event at $t_{(k)}^i$ for DG $i$. It should be clear that decreasing $\sigma_{\omega}$ enhances the chance of event occurring for each DG $i$ at any given time.
    
    \begin{remark}
    \normalfont The main purpose of the control mechanism based on the event condition \eqref{Proposed C} is to reduce the communication cost and the number of control updates while guaranteeing restoration for the operating frequency of the system. Once the triggering condition is met, the current state of DG $i$ is sampled and broadcast to its own controller as well as its neighbors. It should be noted that we do not consider time delays in communications in this work.
    \end{remark}
    
    Before stating our main theorem, we define
            \begin{equation} \label{lamda}
                \lambda = max_{\|\omega\|_2=1}\frac{\omega^T(L+G)^TW(L+G)\omega}{\omega^TW(L+G)\omega},
            \end{equation}
    where  $(L+G)^TW(L+G)$ and $W(L+G)+(L+G)^TW$ are positive-definite matrices. Moreover, denoting $A=L+G$ and recalling that $\hat{\delta}(t)=(\hat{\omega}(t)-\omega_{ref}\mathbf{1}_n)$ and $\delta(t)=({\omega}(t)-\omega_{ref}\mathbf{1}_n)$, we have
        \begin{equation}\label{eqsb2}
         \dot{\delta}(t)=-A\hat{\delta}(t).
        \end{equation}
    We now state a sufficient condition under which the proposed control law leads to the convergence to consensus of all DGs' frequencies.
    
    \begin{theorem}\label{main theorem}
    Given a strongly connected directed graph among the DGs, let the asynchronous system \eqref{closed loop} be driven by the event-triggering mechanism \eqref{Proposed C}. Then, the operating frequency terms $\omega_{i}$, $1 \leq i \leq n$, converge to $\omega_{ref}$ if the event checking period $h$ and the positive parameter $\sigma_{\omega}$  satisfy the inequality
        \begin{equation} \label{Eq16}
            \frac{h}{2}+\sigma_{\omega} < \frac{1}{\lambda}.
        \end{equation}   
    \end{theorem}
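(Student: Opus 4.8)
The plan is to use a weighted quadratic Lyapunov function built from the positive vector $w$ furnished by Lemma~1, exploiting that for a strongly connected graph with at least one positive pinning gain $A=L+G$ is nonsingular and $WA+A^{T}W$ is positive definite (as already asserted around \eqref{lamda}). First I would rewrite the triggering data in closed-loop coordinates. Since $L\mathbf{1}_n=0$, the integrand in \eqref{Proposed C} satisfies $L_i\hat{\omega}(z)+g_i(\hat{\omega}_i(z)-\omega_{ref})=(A\hat{\delta}(z))_i=-\dot{\omega}_i(z)$, so the event condition really bounds the per-DG sampling error $e_i(t):=\hat{\omega}_i(t)-\omega_i(t)$ against the root-mean-square of the local control signal $(A\hat{\delta})_i$ over one period. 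Writing $\hat{\delta}=\delta+e$, the dynamics \eqref{eqsb2} become $\dot{\delta}=-A\hat{\delta}$.

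Next I would take $V(t)=\delta(t)^{T}W\delta(t)=\sum_i w_i(\omega_i-\omega_{ref})^2$, which is positive definite and radially unbounded and vanishes exactly when every $\omega_i=\omega_{ref}$. Differentiating along \eqref{eqsb2} and using $W=W^{T}$ gives $\dot{V}=-2\delta^{T}WA\hat{\delta}=-2\hat{\delta}^{T}WA\hat{\delta}+2e^{T}WA\hat{\delta}$. The first term is the useful dissipation: since $\hat{\delta}^{T}WA\hat{\delta}=\tfrac12\hat{\delta}^{T}(WA+A^{T}W)\hat{\delta}>0$ and, by the definition \eqref{lamda} of $\lambda$, $\hat{\delta}^{T}A^{T}WA\hat{\delta}\le\lambda\,\hat{\delta}^{T}WA\hat{\delta}$ for all $\hat{\delta}$, one obtains $\hat{\delta}^{T}WA\hat{\delta}\ge\frac1\lambda\|A\hat{\delta}\|_W^{2}$, where $\|x\|_W^2:=x^TWx$. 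Hence the dissipation dominates $\frac{2}{\lambda}\|A\hat{\delta}\|_W^{2}$, and everything reduces to absorbing the cross term $2e^{T}WA\hat{\delta}$ into this quantity.

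The crux is controlling that cross term, and here the periodic, integral form of \eqref{Proposed C} is essential, since a pointwise bound on $e$ is unavailable. I would instead integrate $\dot V$ (equivalently, sum its increments across checking instants) and, \emph{separately for each DG $i$}, partition the time axis into that DG's own checking windows $[t_k^i,t_{k+1}^i)$ of length $h$. This per-agent partition is exactly where the asynchrony enters, through the distinct local clocks $t_k^i=t_0^i+kh$ and the start-up convention of Remark~1. On the window elapsed since DG $i$'s last event the held error obeys $e_i(t)=\int_{t_{(k)}^i}^{t}(A\hat{\delta}(s))_i\,ds$, so a Cauchy--Schwarz (Wirtinger-type) estimate converts $\int e_i^2$ into a multiple of $\int (A\hat{\delta})_i^2$ proportional to the window length, contributing the inter-sample drift slack; simultaneously the threshold in \eqref{Proposed C} caps the admissible sampled growth by $\sigma_\omega$ times the same integral, contributing the triggering slack. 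A Young's-inequality balancing then bounds the accumulated cross term by $\big(\tfrac h2+\sigma_\omega\big)$ times the accumulated $\|A\hat{\delta}\|_W^{2}$, with the precise constant $\tfrac h2$ emerging only after careful bookkeeping of the drift over the asynchronous windows.

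Combining the two contributions yields, in integrated form, $\dot V\le -2\big(\tfrac1\lambda-\tfrac h2-\sigma_\omega\big)\|A\hat{\delta}\|_W^{2}$, whose coefficient is strictly negative precisely under hypothesis \eqref{Eq16}. Thus $V$ is nonincreasing and bounded below, hence convergent, and $\int_0^{\infty}\|A\hat{\delta}\|_W^{2}\,dt<\infty$. Because periodic checking forces inter-event times of at least $h$, there is no Zeno behavior and $A\hat{\delta}$ is piecewise constant with only finitely many jumps on bounded intervals; a standard Barbalat/LaSalle-type argument then gives $A\hat{\delta}\to0$. As $A=L+G$ is nonsingular this forces $\hat{\delta}\to0$, whence $e\to0$, $\delta\to0$, and $\omega_i\to\omega_{ref}$ for all $i$. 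I expect the dominant obstacle to be the third paragraph: making the integral triggering bound and the inter-sample drift interact cleanly across the \emph{asynchronous} per-DG partitions so that the two slacks sum to exactly $\tfrac h2+\sigma_\omega$, matching the sharp threshold $1/\lambda$.
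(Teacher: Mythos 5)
Your proposal follows essentially the same route as the paper: the weighted Lyapunov function $V=\delta^{T}W\delta$ built from Lemma~1, the spectral constant $\lambda$ of \eqref{lamda} used to lower-bound the dissipation by $\tfrac{1}{\lambda}\|A\hat{\delta}\|_{W}^{2}$, and the integral triggering bound \eqref{Proposed C} used to control the sampling error, yielding $A\hat{\delta}\to 0$ and then $\delta\to 0$ (your ending via nonsingularity of $A$ plus Barbalat is a cosmetic variant of the paper's limit chain \eqref{eqsb3}--\eqref{eqsb4}). The cross-term bookkeeping you flag as the crux in your third paragraph is exactly the step the paper itself does not carry out, deferring it to Theorem 1 of \cite{wang2017consensus}; your per-agent windowed Cauchy--Schwarz/Young decomposition is the standard form of that argument and does produce the $\tfrac{h}{2}+\sigma_{\omega}<\tfrac{1}{\lambda}$ threshold, so relative to the paper's own proof nothing is missing.
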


    \begin{proof} To prove the theorem, we consider the following candidate Lyapunov function
        \begin{equation*}
        V(t)=\frac{1}{2}\delta(t)^TW\delta(t).
        \end{equation*}   
    From \eqref{eqsb2}, differentiating $V(t)$ results in
        \begin{equation*}
           \dot{V}(t)=-\delta(t)^TWA\hat{\delta}(t).
        \end{equation*}   
    Now, following similar lines of argument as in the proof of Theorem 1 of \cite{wang2017consensus}, we first arrive at
    \begin{equation*}
        \lim_{k \to \infty} \hat{\delta}(t_k )^T WA \hat{\delta}(t_k )=0.
    \end{equation*}
    Then, recalling \eqref{lamda}, we have
    \begin{equation*}
            \hat{\delta}(t_k)^TA^T WA \hat{\delta}(t_k ) \le \lambda \hat{\delta}(t_k)^T WA \hat{\delta}(t_k),   
    \end{equation*}
    which immediately implies that $\lim_{k \to \infty}A\hat{\delta}(t_k)=0$.
    This, together with \eqref{latest broadcast}, result in
    \begin{equation} \label{lim}
        \lim_{t \to \infty}A\hat{\delta}(t)=0, 
    \end{equation}
    or equivalently,
    \begin{equation}\label{eqsb5}
        \lim_{t \to \infty} \dot{\delta}(t) = 0.
    \end{equation} 
    Recalling the event-triggering condition \eqref{Proposed C}, we notice that the following inequality holds for any $k$:
    \begin{equation} \label{inequality}
        \begin{aligned}
         & |\omega_i(t_k^i)-\hat{\omega_i}(t_k^i)|\le \sigma_{\omega}\sqrt{\cfrac{\int_{t_{k-1}^i}^{t_{k}^i}(A_i\hat{\delta}(z))^2dz}{h}},\\
         & ~~~~~~~~~~~~~~~~~~~~~~~i = 1,\dots, n.
        \end{aligned}
    \end{equation}
    More precisely, if an event occurs at $t_k^i$, then $\hat{\omega}_i(t_k^i)=\omega_i(t_k^i)$, which means that \eqref{inequality} holds. If no event occurs at $t_k^i$, then $\hat{\omega}_i(t_k^i)=\hat{\omega}_i(t_{k-1}^i)$, meaning that the event-triggering condition \eqref{Proposed C} is not satisfied, implying that \eqref{inequality} holds. Relations \eqref{lim} and \eqref{inequality} imply that for any $i$,
    \begin{align}\label{eqsb3}
        \lim_{k \to \infty} (\omega_i(t_k^i)-\hat{\omega}_i(t_k^i)) = 0.
    \end{align}
    Since for any $t$, there exists a positive integer $k_t$  such that $t \in [t_{k_t}^i,t_{k_{t}+1}^i]$, we can write
    %\begin{equation}
        \begin{align}\label{eqsb4}
                &\lim_{t \to \infty} (\omega_i(t)-\hat{\omega}_i(t))\vspace{.05in}\nonumber\\
                & = \lim_{t \to \infty} (\omega_i(t)- \omega_i(t_{k_t}^i) +\omega_i(t_{k_t}^i)-\hat{\omega}_i(t))\vspace{.05in}\nonumber\\
                & = \lim_{t \to \infty} ((\omega_i(t)-\omega_{ref})-(\omega_i(t_{k_t}^i)-\omega_{ref})\vspace{.05in}\nonumber\\ & ~~~+\omega_i(t_{k_t}^i)-\hat{\omega}_i(t_{k_t}^i))\vspace{.05in}\nonumber\\
                & = \lim_{t \to \infty} (\delta_i(t)- \delta_i(t_{k_t}^i) +\omega_i(t_{k_t}^i)-\hat{\omega}_i(t_{k_t}^i))\vspace{.05in}\nonumber\\
                & = \lim_{t \to \infty} \left(\int_{t_{k_t}^i}^{t}\dot{\delta}_i(z)dz +\omega_i(t_{k_t}^i)-\hat{\omega_i}(t_{k_t}^i)\right)=0.
        \end{align}
   We note that the last equality in \eqref{eqsb4} is deduced from \eqref{eqsb5} and \eqref{eqsb3}. From \eqref{eqsb4}, we conclude that $ \lim_{t \to \infty} (\omega(t)-\hat{\omega}(t)) = 0$, and consequently, $\lim_{t \to \infty} (\delta(t)-\hat{\delta}(t)) = 0$. Thus,  $\lim_{t\to \infty} A\delta(t) = 0$. Hence,
        \begin{equation*}
        \begin{aligned}
                 &\lim_{t\to \infty} \delta(t)^TWA\delta(t) \\
                 &= \lim_{t\to \infty} \delta(t)^T\left(\frac{WL+L^TW}{2} +WG\right)\delta(t)=0,
        \end{aligned}
        \end{equation*}  
    which implies that
    \[
        \lim_{t\to \infty} \delta(t) = 0.
    \]
    Thus we finally have
        \begin{equation}
            \lim_{t\to \infty} \omega(t) = \lim_{t\to \infty} A(\delta(t)+\omega_{ref}1_n)=\omega_{ref}\mathbf{1}_n,    
        \end{equation}
    which completes the proof.
    \end{proof}
    According to \eqref{auxilary} and \eqref{auxilary control}, $\omega_{ni}$ is written as
        \begin{equation*}
         \omega_{ni}=\int(u_{wi}+m_{pi}\dot{P}_i) \, dt.
        \end{equation*}
    Although the secondary frequency controller eliminates frequency steady state deviations, it may lead to worse active power sharing compared to the primary controller. However, one expects that once the secondary frequency control is applied, the control system will still be able to provide the same power sharing pattern guaranteed by the primary controller \cite{bidram2013secondary}. Applying the primary droop controller, the following equality is then satisfied:
        \begin{equation} \label{equality}
             m_{P_{1}}{P}_{1}=\dots =  m_{P_{n}}{P}_{n},
        \end{equation} 
    where $m_{P_i}$ denotes the active power rating of each DG $i$.
    
    Similar to the primary controller, the secondary frequency controller should guarantee \eqref{equality}. In order to achieve this requirement, an extra control input should be defined. Differentiating \eqref{equality} and defining a control input, the power sharing problem is transformed to the consensus problem of first-order multi-agent systems
        \begin{equation*}
            \begin{cases}
            m_{P_{1}}\dot{P}_{1} = u_{p_1}\vspace{.05in}\\
            m_{P_{2}}\dot{P}_{1} = u_{p_2}\vspace{.05in}\\
            \dots\vspace{.05in}\\
            m_{P_{N}}\dot{P}_1 = u_{p_N}
            \end{cases}
        \end{equation*} 
    Given a strongly connected communication network topology among DGs, the auxiliary control input $u_{Pi}$ is established as 
        \begin{equation*}
            u_{P_i}(t)=c_{P_i}e_{p_i}(t),
        \end{equation*}      
    where $c_{Pi}$ is the active power control gain and $e_{Pi}$ is the following neighboring tracking problem:
        \begin{equation} \label{neighboring T}
          e_{P_i}(t)=\sum_{j\in{N_i}}\tilde{a}_{i_j}(m_{P_i}\hat{P}_i(t)-m_{P_j}\hat{P}_j(t)),~t\ge h.
        \end{equation}      
    Since the power sharing problem is a consensus problem, DGs must reach a non-prescribed agreement according to their power ratings. So compared to \eqref{Error}, there is no external reference input in \eqref{neighboring T}. 
    We now present the following event-triggering mechanism
    \begin{equation}\label{eq32}
        \begin{aligned}
            &|m_{pi}P_i(t^i_{(k)}+ph)-m_{pi}P_i(t_{(k)}^i)|\vspace{.05in}\\
            &>\sigma_{P}\sqrt{\frac{\int_{t_{(k)}^i+(p-1)h}^{t_{(k)}^i+ph}(\hat{\Omega}(z))^2 \,dz}{h}},
        \end{aligned}
    \end{equation} 
    \noindent where 
    \begin{equation*}
        \hat{\Omega}(z)=\sum_{j \in N}a_{ij}(m_{pi}\hat{P}_i(z)-m_{pj}\hat{P}_j(z)).
    \end{equation*}
    Using the same procedure as in Theorem \ref{main theorem}, we can prove that the DGs active power asymptotically converge to a common non-prescribed value. Then, the control input $\omega_{ni}$ is written as 
    \begin{equation*}
        \omega_{ni}= \int (u_{\omega i}+u_{Pi})dt.
    \end{equation*}
    Fig.~\ref{Secondary_control} illustrates the block diagram of the proposed secondary frequency and active power controllers.
        \begin{figure*} [t]
        \centering
        \includegraphics[scale=0.6]{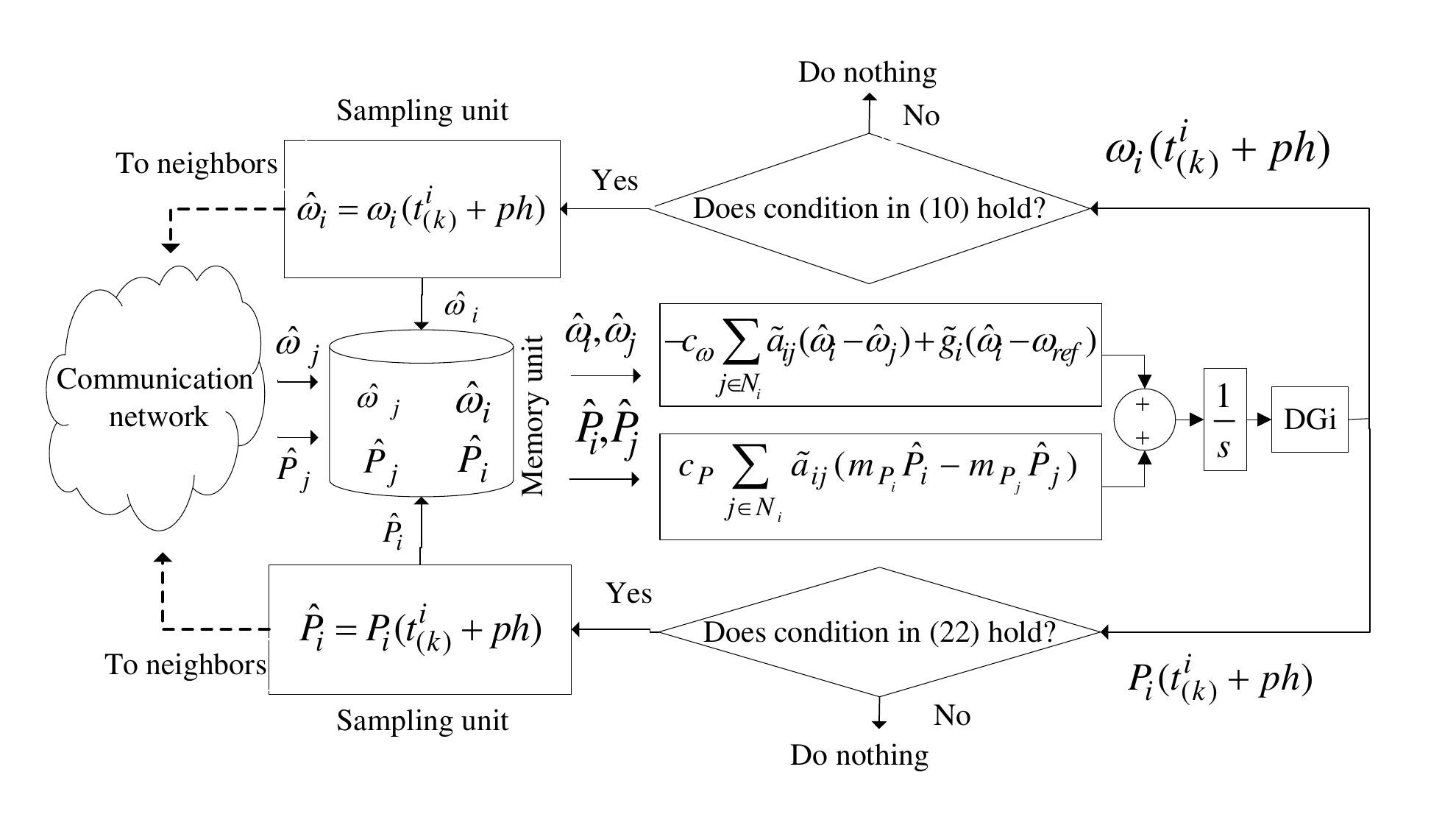}
        \caption{Block diagram of the distributed secondary control with asynchronous periodic event-triggered communication mechanism.}
        \label{Secondary_control}
    \end{figure*}
\begin{remark}
\normalfont From practical standpoint, one should consider the effect of $\lambda$ and its restrictions on the size of processors sampling periods as well as the number of events. The term $\lambda$ itself is directly dependent on the controller gain $c_\omega$, since we multiplied the graph Laplacian matrix by the controller gain $c_w$ along the proof. Hence, determining appropriate $\lambda$ is a trade-off problem between the speed of convergence and the computation complexity.
\end{remark}

\section{Case Studies}

    In this section, an islanded ac microgrid test system is developed in MATLAB/Simulink environment to demonstrate the performance of the proposed event-based active power and frequency controllers through evaluating three scenarios. In the first case, we check our proposed secondary control ability to restore frequency deviations caused by droop controller and accurate power sharing. In the second case, we will make a comparison between our asynchronous event-based method with the proposed method discussed in \cite{bidram2013secondary}. Robustness of the proposed control scheme against load changes is evaluated in the last case. Here we consider a 380 V, 50 Hz microgrid system consisting of four DGs with a strongly connected communication graph $\mathcal{G}$ as shown in Fig.~\ref{Communication}. The inner loop control parameters and load specifications are provided in Table \ref{T1} and \ref{T2}.  Let the graph Laplacian of the communication topology be 
    
        \[
        \tilde{L}=
          \begin{bmatrix}
            1 & 0 & 0 & -1 \\
            -1 & 1 & 0 & 0 \\
            0 & -1 & 1 & 0 \\
            0 & 0 & -1 & 1
          \end{bmatrix}.
        \]
    
    \begin{figure}[t]
        \centering
        \includegraphics[scale=0.7]{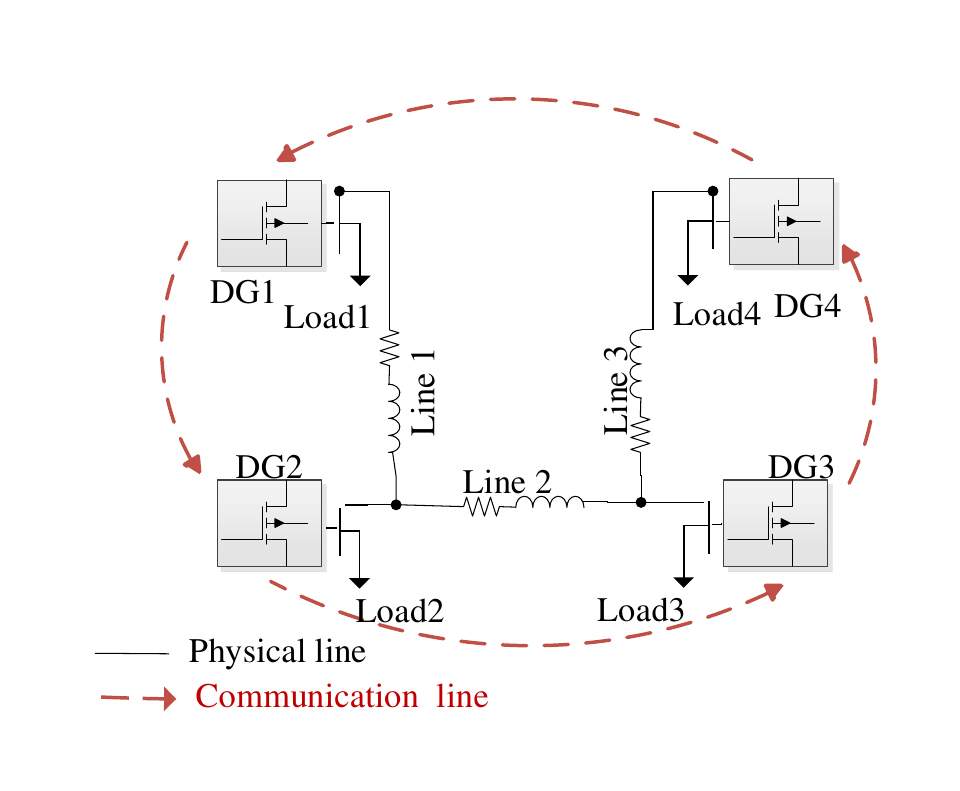}
        \caption{Single-line diagram of the studies microgrid system.}
        \label{Communication}
    \end{figure}
       \begin{table}[]
    \caption{Specification of the microgrid system.}
    \centering
    \begin{tabular}{|c|c|c|c|c|c|}
    \hline
    \multicolumn{6}{|c|}{\textbf{DGs}}                                                                                                                                                                                    \\ \hline
    \multicolumn{2}{|c|}{}       & \multicolumn{2}{c|}{\begin{tabular}[c]{@{}c@{}}DG 1 and 2 \\ \end{tabular}} & \multicolumn{2}{c|}{\begin{tabular}[c]{@{}c@{}}DG  3 and 4\\  \end{tabular}} \\ \hline
    \multicolumn{2}{|c|}{$mP$}     & \multicolumn{2}{c|}{$9.4 \times 10^{-5}$}                                                              & \multicolumn{2}{c|}{$12.5 \times 10^{-5}$}                                                              \\ \hline
    \multicolumn{2}{|c|}{$nQ$}     & \multicolumn{2}{c|}{$1.3\times 10^{-3}$}                                                               & \multicolumn{2}{c|}{$1.4\times 10^{-3}$}                                                                \\ \hline
    \multicolumn{2}{|c|}{$Rc$}     & \multicolumn{2}{c|}{$0.03~\Omega$}                                                               & \multicolumn{2}{c|}{$0.03~\Omega$}                                                                \\ \hline
    \multicolumn{2}{|c|}{$lC$}     & \multicolumn{2}{c|}{0.35 mH}                                                              & \multicolumn{2}{c|}{0.35 mH}                                                               \\ \hline
    \multicolumn{2}{|c|}{$Rf$}     & \multicolumn{2}{c|}{$0.1~\Omega$}                                                                & \multicolumn{2}{c|}{$0.1~\Omega$}                                                                 \\ \hline
    \multicolumn{2}{|c|}{$Lf$}     & \multicolumn{2}{c|}{1.35 mH}                                                              & \multicolumn{2}{c|}{1.35 mH}                                                               \\ \hline
    \multicolumn{2}{|c|}{$Cf$}     & \multicolumn{2}{c|}{0.050 mF}                                                                & \multicolumn{2}{c|}{0.050 mF}                                                                 \\ \hline
    \multicolumn{2}{|c|}{$KPV$}    & \multicolumn{2}{c|}{0.1}                                                                  & \multicolumn{2}{c|}{0.05}                                                                  \\ \hline
    \multicolumn{2}{|c|}{$KIV$}    & \multicolumn{2}{c|}{420}                                                                  & \multicolumn{2}{c|}{390}                                                                   \\ \hline
    \multicolumn{2}{|c|}{$KPC$}    & \multicolumn{2}{c|}{15}                                                                   & \multicolumn{2}{c|}{10.5}                                                                  \\ \hline
    \multicolumn{2}{|c|}{$KIC$}    & \multicolumn{2}{c|}{20000}                                                                & \multicolumn{2}{c|}{16000}                                                                 \\ \hline
    \multicolumn{6}{|c|}{\textbf{Lines}}                                                                                                                                                                                  \\ \hline
    \multicolumn{2}{|c|}{Line 1} & \multicolumn{2}{c|}{Line 2}                                                               & \multicolumn{2}{c|}{Line 3}                                                                \\ \hline
    $Rl1$          & $0.23~\Omega$        & $Rl2$                                       & $0.35~\Omega$                                        & $Rl3$                                        & $0.23~\Omega$                                        \\ \hline
    $Ll1$          & 0.318  mH        & $Ll2$                                       & 1.847 mH                                       & $Ll3$                                        & 0.318 mH                                        \\ \hline
    \end{tabular} \label{T1}
    \end{table}
        \begin{table}[h]
    \caption{Loads per phase of the microgrid system}
    \begin{tabular}{|c|c|c|c|c|c|c|c|}
    \hline
    \multicolumn{2}{|c|}{Load 1} & \multicolumn{2}{c|}{Load 2} & \multicolumn{2}{c|}{Load 3} & \multicolumn{2}{c|}{Load 4} \\ \hline
    $R1$         & $20~\Omega$            & $R2$        & $35~\Omega$           & $R3$        & $35~\Omega$           & $R4$        & $20~\Omega$           \\ \hline
    $L1$         & 0.035 H        & $L2$        & 0.050 H        & $L3$        & 0.050 H        & $L4$        & 0.040 H        \\ \hline
    \end{tabular} \label{T2}
    \end{table}
    In addition, DG 1 is the only DG that can access the reference with the pining gain of $\tilde{g}=1$. Set the controller gains $c_\omega=c_p=4.5$. Then, multiplying the Laplacian graph by these gains, from \eqref{lamda} we obtain $\lambda=9$ for the graph Laplacian associated with the frequency controller. Setting $\sigma_\omega=\sigma_P=0.1$, according to condition \eqref{Eq16}, all DGs are guaranteed to be restored if the sampling period is chosen sufficiently small to satisfy \eqref{Eq16}. We now pick a set of random time instants as $t_0^1=0$ s, $t_0^2=0.005$ s, $t_0^3=0.008$ s, $t_0^4=0.009$ s and set the event-checking period $h=0.01$ s, which satisfies the condition \eqref{Eq16}. A detailed description of the proposed asynchronous periodic event-based secondary control scheme is presented in Algorithm 1.
    
    \begin{algorithm}
    \caption{The proposed event-triggered distributed secondary control Algorithm.}\label{Times} %\hline

    \textbf{Step 1} Initialize $p=0$, $k=0$, $t_{(0)}^i=t_0^i$ \vspace{.01in}\\    
    \textbf{Step 2} Sample and store $y_i(t_{(0)}^i )=y_i(t_0^i)$ and send it to neighbors\vspace{.01in}\\
	\textbf{Step 3} Loop:\vspace{.01in}\\
    - Check the event-triggering mechanisms, \eqref{Proposed C} and \eqref{eq32}\vspace{.01in}\\
    \textbf{If} both event-triggering conditions \eqref{Proposed C} and \eqref{eq32} hold, \vspace{.01in}\\
    -\textbf{then}: \vspace{.01in}\\ 
	- Update and broadcast $\hat{y}_i (t)=y_i (t_{(k)}^i+ph)$,\vspace{.01in}\\
	- $t_{(k)}^i=t_{(k)}^i+ph,k=k+1,p=0$,\\
	-  \textbf{else}: \vspace{.01in}\\
	- broadcast the previous $\hat{y}_i(t)$ without updating\vspace{.01in}\\
	- $p=p+1$. \vspace{.01in}\\
	- \textbf{end if} \vspace{.01in}\\
    \textbf{NOTE:} $y_i(t)$ represents each of the operating frequency, $\omega_i$ and active power, $P_i$ of DG $i$. 
    \end{algorithm}
    
    \subsection{Case 1: Frequency Restoration in Microgrid}
    In this subsection, we evaluate the ability of our proposed control method in frequency restoration. The microgrid is assumed to be islanded from the upstream grid at $t=0$ and only the primary controller is activated. As seen in Fig.~\ref{Case1new}(a), after islanding the microgrid, frequency terms of the DGs deviate from their reference values. At $t=2$ s,  the frequency and active power controllers are activated. After applying the secondary controller, the dropped operating frequency terms of DGs are properly restored to their nominal values. Fig.~\ref{Case1new}(b) shows that the control scheme applied restores frequency while sharing active power accurately.
    \begin{figure}[t]
        \centering
        \includegraphics[width=45ex]{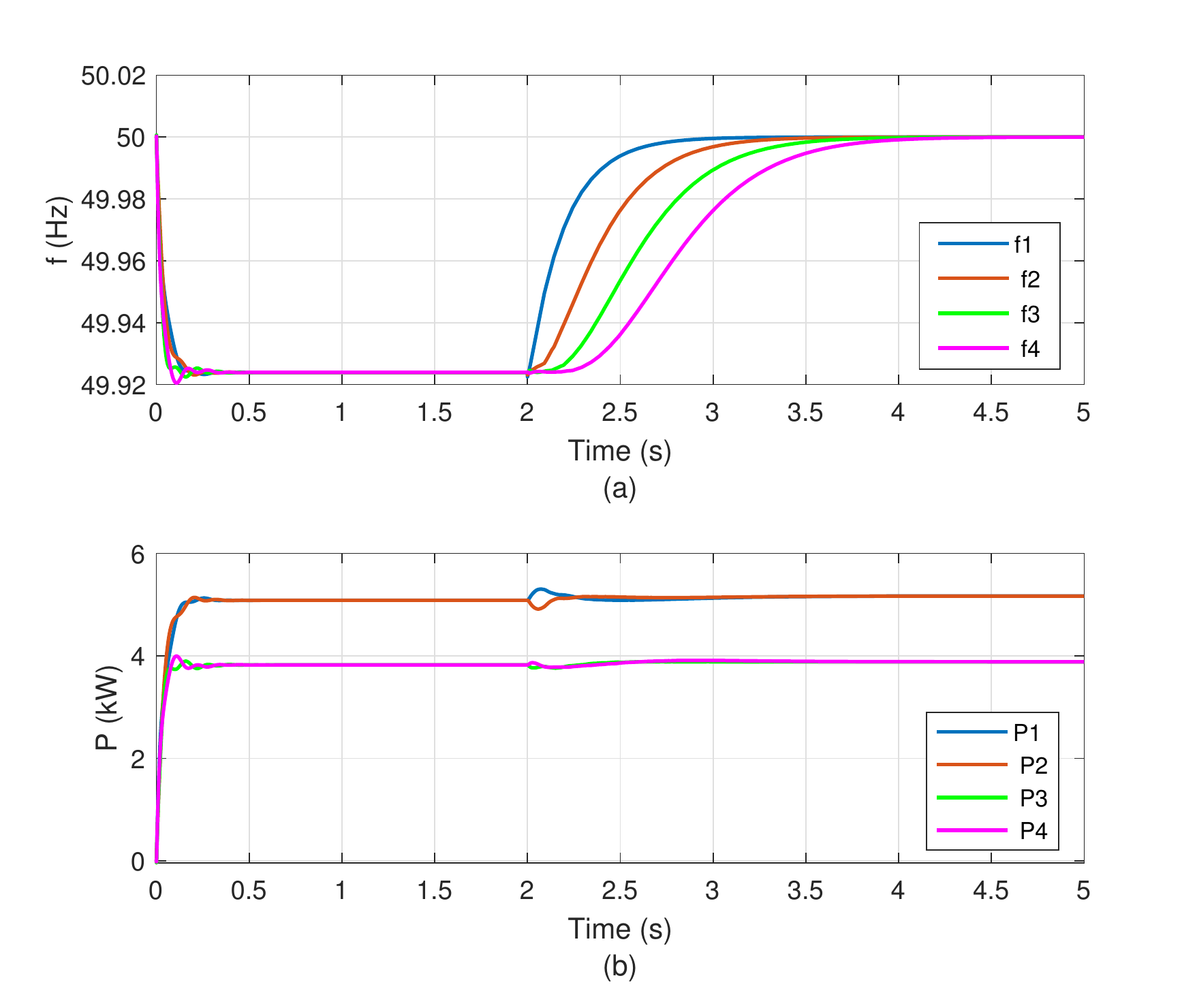}
        \caption{Case 1 DGs' (a) operating frequency; (b) active power.}
        \label{Case1new}
    \end{figure}    
	\subsection{Case 2: Performance Comparisons}
	
    In this subsection, we compare our proposed control method with the conventional time-triggered distributed frequency controller in \cite{bidram2013secondary}, where communications are done synchronously, to demonstrate effectiveness of this integral-type event triggered controller. To this aim, we resimulate Case~1 for both our event-triggered scheme and the proposed method in \cite{bidram2013secondary}. Here we set $c_\omega=c_p=4.5$  and $\tilde{a}_{ij}=1$ for both protocols. It should be noticed that we have considered synchronous periodic communication network in case of time-triggered controller in \cite{bidram2013secondary}. Fig.~\ref{Case3new} shows the performance comparison between our proposed event-triggered method with the controller in \cite{bidram2013secondary}. The outcome underlines that in spite of asynchronous communication, which we considered in our case, the proposed control scheme has an identical performance in comparison with the time-triggered control method in \cite{bidram2013secondary} in terms of frequency restoration and active power sharing.
     
     We now turn our focus to the number of events during the second three simulation period wherein the secondary controller is activated at that time interval (2,5]. The number of communications conducted under our proposed method and the time-triggered communication method in \cite{bidram2013secondary} with the sampling period of $h=0.01$ s are listed in Table~\ref{T3}. It is observed from Table~\ref{T3} that only 19.4\% of the communication lines are busy. These numbers indicate that the proposed integral-type asynchronous periodic event-triggering mechanism effectively reduces the data transmission numbers and computation complexity.
        \begin{figure}[t]
        \centering
        \includegraphics[width=55ex]{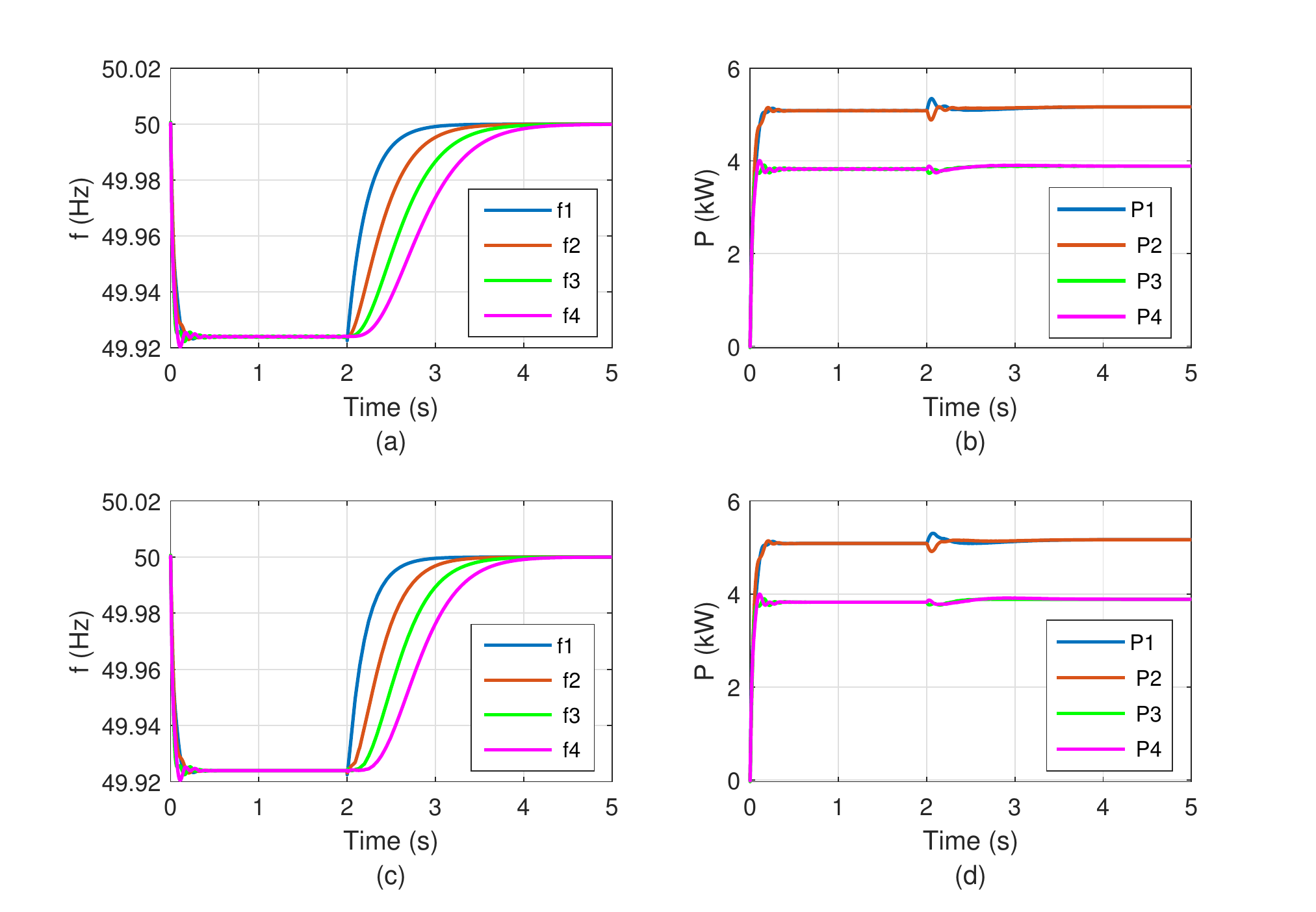}
        \caption{Performance comparison between our proposed control method ((a) and (b)) and that of [6] ((c) and (d)).}% (a) operating frequency; (b) active power magnitudes of the DG units using time-triggered communication; (c) operating frequency; (d) active power magnitudes of the DG units using the proposed event-triggered method.}
        \label{Case3new}
        \end{figure} 
    \begin{table}[h]\centering
    \caption{Communication rate under different data exchange strategies}
    \begin{tabular}{|c|c|c|c|c|c|}
    \hline
     Communication &DG1	&DG2	&DG3&	DG4&	Total  \\ \hline
     Time-Triggered  &	300	&300	&300	&300	&1200 \\ \hline
    Event-Triggered&	34	&60	&68	&71&	233 \\ \hline
    \end{tabular} \label{T3}
    \end{table}
\subsection{Case~3: Performance Analysis Against Load Changes}
    
    In this subsection, robust performance of the proposed controller under load changes is tested. As in the previous cases, it is assumed that the microgrid system is disconnected from the main grid at the beginning and only the primary controller tasks in the first two seconds. At $t=2$ s, the secondary controller is activated, and then, a load inclusion is imposed at $t=5$ s by connecting an $RL$ load with $R=35~\Omega$ and $L=50$ mH in parallel to load 3. To highlight the proposed control method's robust performance, we disconnect the added load at $t=8$ s. Fig.~\ref{Casnew}(a) shows that the proposed secondary control method is able to  remarkably handle these load deviations. Fig.~\ref{Casnew}(b) depicts the capability of the secondary controller in guaranteeing accurate power sharing.
        \begin{figure}[t]
        \centering
        \includegraphics[width=45ex]{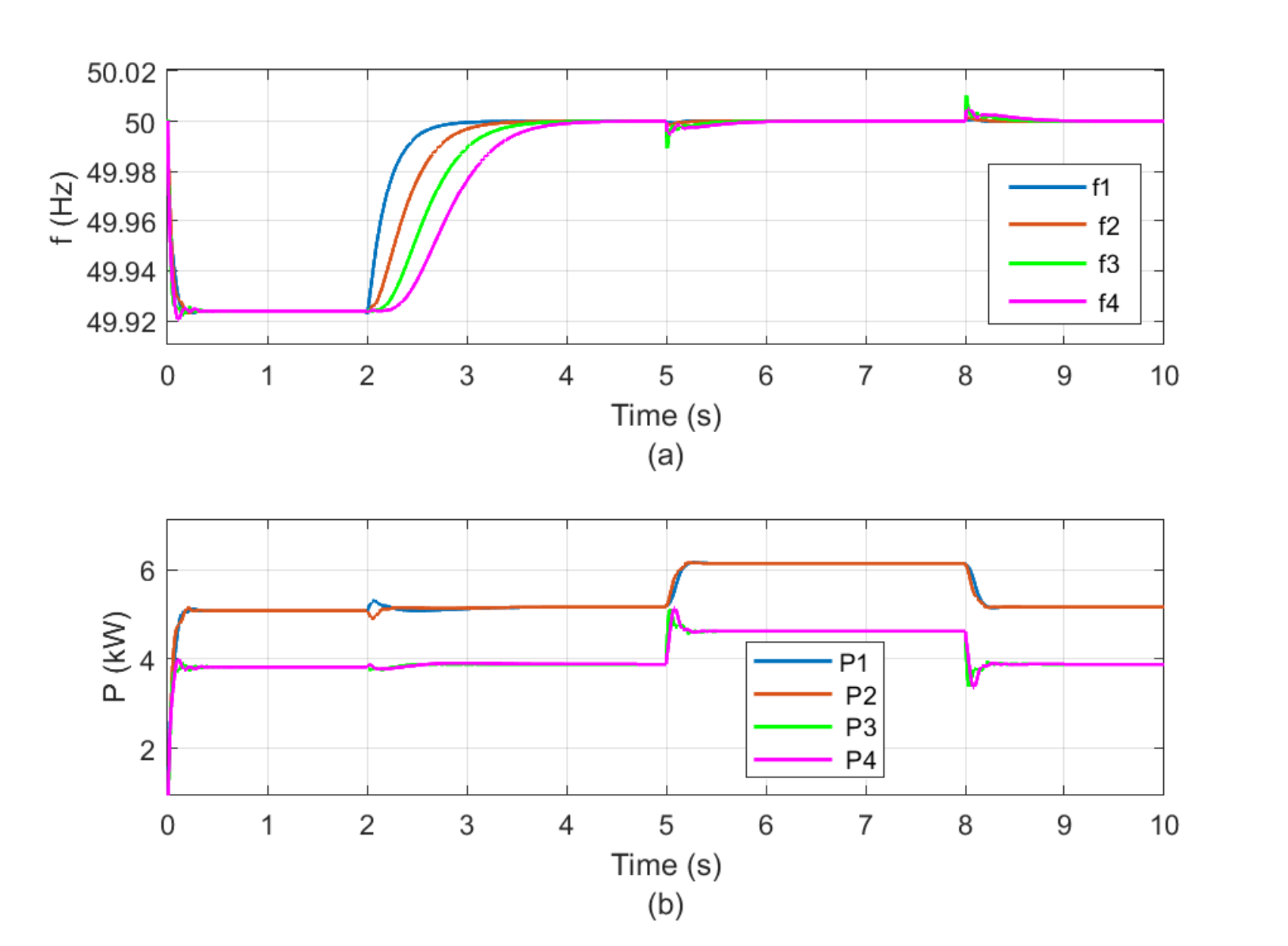}
        \caption{DGs' response under load changes.}
        \label{Casnew}
        \end{figure}    
 \vspace{10pt}
\section{Conclusion}

    In this paper, we have developed a distributed secondary frequency control scheme for an islanded ac microgrid in the case where DGs' clocks are not synchronized. In order to reduce communication and processors computational burden, a sampled-based event-triggered communication mechanism has been developed. In this mechanism, each DG checks its triggering condition periodically according to its own clock, which is possibly asynchronous to those of others. The proposed control scheme eliminates the need for a globally synchronized clock, making it more realistic and practical compared to existing methods. Developing a Lyapunov function, we have obtained a sufficient condition under which the proposed control laws steer all DGs' frequencies to converge to the desired value. Finally, the effectiveness of the proposed method has been verified through simulating a microgrid test-system under different scenarios in MATLAB/Simulink software environment. 
    
    In future work, we will move a step further to design controllers with triggering mechanisms that guarantee the asynchronous restoration problem for microgrid systems, considering communication time delays. Moreover, in addition to asynchronized clocks, the case where DGs have different sampling periods is of great interest.

% References

\balance
%\bibliography{bibliographyIEEETII1}
\bibliography{bibliographyIEEETII2}
\bibliographystyle{IEEEtran}
\noindent

\end{document}